\newtheorem{theorem}{Theorem}[section]
\newtheorem{lemma}[theorem]{Lemma}
\newtheorem{definition}[theorem]{Definition}
\newtheorem{conjecture}[theorem]{Conjecture}
\newcommand{\vv}{\bar \sigma}
\newcommand{\vY}{\bar \omega}
\newcommand{\mf}{\mathcal{M}}
\newcommand{\mq}{m_{bh}}
\newcommand{\dv}{dS}
\newcommand{\ds}{ds}
\newcommand{\Ls}{\Delta_0}
\newcommand{\vi}{\iota}
\begin{document}
\title{Extreme throat initial data set and horizon area--angular
  momentum inequality for axisymmetric black holes}

\author{Sergio Dain} \affiliation{Facultad de Matem\'atica, Astronom\'{\i}a y
  F\'{i}sica, FaMAF, Universidad
  Nacional de C\'ordoba,\\
  Instituto de F\'{\i}sica Enrique Gaviola, IFEG, CONICET,\\
  Ciudad Universitaria (5000) C\'ordoba, Argentina} \affiliation{Max Planck
  Institute for Gravitational Physics (Albert Einstein Institute) Am
  M\"uhlenberg 1 D-14476 Potsdam Germany.}

\date{\today}

\begin{abstract}
  We present a formula that relates the variations of the area of extreme
  throat initial data with the variation of an appropriate defined mass
  functional. From this expression we deduce that the first variation, with
  fixed angular momentum, of the area is zero and the second variation is
  positive definite evaluated at the extreme Kerr throat initial data. This
  indicates that the area of the extreme Kerr throat initial data is a minimum
  among this class of data.  And hence the area of generic throat initial data
  is bounded from below by the angular momentum.  Also, this result strongly
  suggests that the inequality between area and angular momentum holds for
  generic asymptotically flat axially symmetric black holes. As an application,
  we prove this inequality in the non trivial family of spinning Bowen-York
  initial data.
\end{abstract}
\pacs{04.70.Bw, 04.20.Ex, 02.30.Jr, 02.30.Xx, 02.40.Vh} 

\maketitle

\section{Introduction}
\label{sec:introduction}

Geometrical inequalities play an important role in General Relativity, in
particular for vacuum black holes, where the geometrical aspects of the theory
appears in their pure form. A geometrical inequality in General Relativity
relates quantities that have both a physical interpretation and a geometrical
definition.  The most relevant example is the positive mass theorem. The mass
of the spacetime measures the total amount of energy and hence it should be
positive from the physical point of view. Also, the mass $m$ in General
Relativity is represented by a pure geometrical quantity on a Riemannian
manifold \cite{Arnowitt62} \cite{Bartnik86}.  From the geometrical mass
definition, without the physical picture, it would be very hard to conjecture
that this quantity should be positive.  On the other hand, the highly non
trivial proof of this inequality \cite{Schoen79b}\cite{Schoen81}\cite{Witten81}
reveals the subtle way in which Einstein equations describe the gravitational
field.

For black holes, the first example of geometrical inequality is the Penrose
inequality (see the recent review article \cite{Mars:2009cj} and references
therein) which relates the
area of the horizon $A$ (the `size' of the black holes) with the total mass of
the spacetime
\begin{equation}
  \label{eq:65}
 \sqrt{\frac{A}{16\pi}}\leq m.
\end{equation}
Another example is the inequality between mass
and the angular momentum $J$ for axially symmetric black holes 
\begin{equation}
  \label{eq:4}
  \sqrt{|J|}\leq m. 
\end{equation}
See \cite{Dain06c}\cite{Dain05e} \cite{Chrusciel:2007ak}, and also the
generalization which includes charge presented in \cite{Chrusciel:2009ki}
\cite{Costa:2009hn}.  Inequalities \eqref{eq:65} and \eqref{eq:4} are closed
related with the weak cosmic censorship conjecture. They can be interpreted as
indirect but relevant indications of the validity of this conjecture.  As in
the case of the positive of the mass, these inequalities has been discovered
first by physical arguments, and then proved afterword (under appropriated and
restricted assumptions, see the references mentioned above) as a rigorous
consequence of Einstein equations.  The proofs provide also new insight into
the mechanisms of Einstein equations. As an example (which is connected with
the subject of this article), we mention that the proof of \eqref{eq:4}
involves a variational characterization of the extreme Kerr black hole as an
absolute minimum of the mass.


The total mass is a global quantity. On the other hand the area $A$ and the
angular momentum in axial symmetry $J$, involved in inequalities \eqref{eq:65}
and \eqref{eq:4} respectively, are quasi-local quantities. Namely they carry
information on a bounded region of the spacetime. In contrast with a local
quantity like a tensor field which depends on a point of the spacetime.
Inequalities \eqref{eq:65} and \eqref{eq:4} relate global quantities with
quasi-local ones. It is well known that the energy of the gravitational field
can not be represented by a local quantity. The best one can hope is to obtain
an expression of the total energy of a bounded region of the spacetime. These
are the so called quasi-local mass definition (see the review article
\cite{Szabados04} and reference therein). For some of the quasi-local mass
proposals there exist also  positivity proofs and hence we obtain a
quasi-local geometrical inequality.  One would expect that for a black hole
pure quasi-local inequalities are also valid.  The relevance of this kind of
inequalities is that they provide a much finer control on the dynamics of a black
holes than the global versions.  The main purpose of this article is to study
one example of such quasi-local inequality for vacuum black holes and to give
non trivial evidences of its validity.


The area of the horizon is a well defined quasi-local quantity for a generic
black hole. In general, the quasi-local angular momentum is difficult to define
(see \cite{Szabados04}), but in the case of axial symmetry there exists a well
defined notion, namely the Komar angular momentum.  This is the angular
momentum used in inequality \eqref{eq:4}.  That is, for generic (i.e. not necessarily
stationary) axially symmetric black holes we have two well defined quasi-local
physical quantities, the horizon area $A$ and the angular momentum $J$. In
terms of $A$ and $J$, the Christodoulou \cite{Christodoulou70} mass of the
black hole is defined as follows
\begin{equation}
  \label{eq:32}
  \mq=\sqrt{\frac{A}{16\pi}+ \frac{4\pi J^2}{A}}.
\end{equation}
For the Kerr black hole this formula gives precisely the total mass of the
black hole which is equal to the total mass of the spacetime. In general, for
an horizon with only one connected component one would expect that $\mq$ is
less than the total mass of the spacetime with equality only for the Kerr black
hole.  This would give a generalization of Penrose inequality including angular
momentum (in fact, an strong generalization) in the spirit of \cite{Jang79}
(see the discussion in \cite{Mars:2009cj}).  For the case of many black holes
the negative interaction energy between the black holes should be also
considered and hence it is not expected a simple inequality with respect to the
total mass (see \cite{Weinstein05} for a discussion of the analog of this
inequality with charges). We will not discuss this point further, since we
are interested in this article only in quasi-local inequalities. We mention it
because it plays a role in the physical interpretation of the quasi-local mass
$\mq$.

The formula \eqref{eq:32} trivially satisfies the inequality
\begin{equation}
  \label{eq:33}
   \sqrt{|J|}\leq \mq. 
\end{equation}
This is, of course, just because the Kerr black hole satisfies this bound.
Hence, if we accept \eqref{eq:32} as the correct formula for the quasi-local
mass of an axially symmetric black hole, then \eqref{eq:33} provides the,
rather trivial, quasi-local version of \eqref{eq:4}. The real question is, does
the formula \eqref{eq:32} represents the quasi-local mass of a non-stationary
black hole? Let us analyze the behavior of the mass \eqref{eq:32} from a physical point
of view. 

Let us assume that for a generic axially symmetric black hole the quantity
$\mq$ gives a measure of the quasi-local mass of the black hole.  Consider the
evolution of $\mq$. By the area theorem, we know that the horizon area will
increase. If we assume axial symmetry, then the angular momentum will be
conserved at the quasi-local level (we are assuming pure vacuum). On physical
grounds, one would expect that in this situation the quasi-local mass of the
black hole should increase with the area, since there is no mechanism at the
classical level to extract mass from the black hole.  In effect, the only way
to extract mass from a black hole is by extracting angular momentum through a
Penrose process.  But angular momentum transfer is forbidden in pure vacuum
axial symmetry.  Then, one would expect that both the area $A$ and the
quasi-local mass $\mq$ should monotonically increase with time.

Let us take a time derivative of $\mq$ (denoted by a dot). From the formula
\eqref{eq:32} we obtain
\begin{equation}
  \label{eq:34}
  \dot m_{bh} = \frac{\dot A}{32\pi \mq} \left(1-\left(\frac{8\pi
        J}{A}\right)^2 \right), 
\end{equation}
were we have used that the angular momentum $J$ is conserved. Since, by the
area theorem, we have
\begin{equation}
  \label{eq:9}
  \dot A \geq 0,
\end{equation}
the time derivative of $\mq$ will be positive (and hence the mass $\mq$ will
increase with the area) if and only if the following inequality is satisfied
\begin{equation}
  \label{eq:5}
8\pi |J|\leq  A.
\end{equation}
Then, it is natural to conjecture that \eqref{eq:5} should be satisfied for any
horizon in an axially symmetric asymptotically flat initial data.  If there are
initial data that violate \eqref{eq:5} then in the evolution the area will
increase but the mass $\mq$ will decrease. This will indicate that the quantity
$\mq$ has not the desired physical meaning.  Also, a rigidity statement is
expected. Namely, the equality in \eqref{eq:5} is reached only by the extreme
Kerr black hole.

If inequality \eqref{eq:5} is true, then we have a non trivial monotonic
quantity (in addition to the black hole area) $\mq$
\begin{equation}
  \label{eq:10}
 \dot m_{bh} \geq 0.
\end{equation}

It is important to emphasize that the physical arguments presented above in
support of \eqref{eq:5} are certainly weaker in comparison with the ones behind
the inequalities \eqref{eq:65} and \eqref{eq:4}. A counter example of any of
these inequality will prove that the standard picture of the gravitational
collapse is wrong. On the other hand, a counter example of \eqref{eq:5} will
just prove that the quasi-local mass \eqref{eq:32} is not appropriate to
describe the evolution of a non-stationary black hole.  One can 
imagine other expressions for quasi-local mass, may be more involved, in axial
symmetry.  On the contrary, reversing the argument, a proof of \eqref{eq:5}
will certainly prove that the mass \eqref{eq:32} has physical meaning for
non-stationary black holes as a natural quasi-local mass. Also, the inequality
\eqref{eq:5} provide a non trivial control of the size of a black hole valid at
any time.  

If the rigidity statement also holds, this inequality will provide a remarkable
quasi-local measure of how far is the data from the extreme black hole data.
This provides an `extremality criteria' in the spirit of \cite{Booth:2007wu},
although restricted only to axial symmetry.  In the article \cite{Dain:2007pk}
it has been conjectured that, within axially symmetry, to prove the stability
of a nearly extreme black hole is perhaps simpler than a Schwarzschild black
hole. It is possible that this quasi-local extremality criteria will have relevant
applications in this context.

Let us also point out that the inequality \eqref{eq:5} is related with the
surface gravity density (or temperature) of a black hole.  The surface gravity
density $\kappa$ of the Kerr black hole can be written  in terms of the
quasi-local quantities $A$ and $J$ as follows
\begin{equation}
  \label{eq:14}
  \kappa=\frac{1}{4\sqrt{\frac{A}{16\pi}+\frac{4\pi J^2}{A}}}
  \left(1-\left(\frac{8\pi J}{A}\right)^2 \right). 
\end{equation}
From this equation, we see that for the Kerr black hole $\kappa$ is positive
because inequality (\ref{eq:5}) holds. Hence, if inequality (\ref{eq:5}) holds
for  generic, non-stationary axially symmetric black holes we can define the
same expression for $\kappa$ for this class of black holes.

All the previous arguments lead to the following conjecture
\begin{conjecture}
\label{c:1}
Consider an asymptotically flat, vacuum, complete axially symmetric initial
data set for the Einstein equations. Then the following inequality holds
\begin{equation}
  \label{eq:16}
 8\pi |J|\leq  A,
\end{equation}
where $A$ and $J$ are the area and angular momentum of a connected component of
the apparent horizon.
\end{conjecture}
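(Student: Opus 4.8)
The plan is to reduce Conjecture~\ref{c:1} to a variational statement on the restricted class of extreme throat initial data and then to close that variational problem globally.

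\textbf{Step 1 (reduction to a stable axisymmetric marginally trapped surface / throat data).} Given a connected component $S$ of the apparent horizon, I would first use that $S$ may be taken to be a stable axisymmetric marginally outer trapped surface (for instance the outermost one). The stability inequality for $S$ — positivity of the associated stability operator tested against axisymmetric functions — together with the vacuum constraints and the Gauss equation on $S\cong S^2$, constrains the induced $2$-geometry: writing it in Weyl--Papapetrou form with an axisymmetric conformal factor and a twist potential $\omega$, the angular momentum $J$ appears as a flux/boundary integral built from $\omega$ and the area $A$ as the integral of the conformal factor. Equivalently, and this is the viewpoint of the title, one glues a half-cylindrical (asymptotically cylindrical) end to the data along $S$ to obtain a complete, axisymmetric, maximal vacuum \emph{throat} initial data whose neck has area $A$ and angular momentum $J$ and on which the relevant inequality survives globally. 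It then suffices to prove $8\pi|J|\le A$ for throat data.

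\textbf{Step 2 (mass functional and the extreme Kerr throat).} On throat data the induced geometry is encoded by a pair $(\sigma,\omega)$, and — as in the earlier section of the paper — one introduces the mass functional $\mathcal{M}$, the renormalized harmonic-map energy of $(\sigma,\omega)$ into the hyperbolic plane $SL(2,\mathbb{R})/SO(2)$. The extreme Kerr throat is the (essentially unique) critical point of $\mathcal{M}$ at fixed $J$, and there $A=8\pi|J|$ exactly. Using the formula of the paper relating $\delta A$ to $\delta\mathcal{M}$ at fixed $J$, together with the quoted facts that the first variation of $A$ vanishes and its second variation is positive definite at the extreme Kerr throat, this data is a strict local minimum of $A$ among throat data of the same angular momentum.

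\textbf{Step 3 (from local to global; main obstacle).} The remaining task is to upgrade this to a global minimum, which would give $A\ge 8\pi|J|$ for all throat data — hence for every component of the apparent horizon — with equality precisely for the extreme Kerr throat. The natural mechanism is geodesic convexity of the harmonic-map energy on the negatively curved target $SL(2,\mathbb{R})/SO(2)$: along the geodesic joining a given throat's field to the extreme Kerr field the functional $\mathcal{M}$ is convex, its derivative at the endpoint vanishes by criticality, hence $\mathcal{M}$ is globally minimized there; via the $\delta A$--$\delta\mathcal{M}$ identity this propagates to $A$, and rigidity follows from strict convexity. I expect the two genuine difficulties to be: (i) the reduction in Step~1 for a general component of the apparent horizon — arranging the cut-and-glue (or the direct use of the stability inequality) so that $J$ is unchanged, the area is not increased, and the stability/energy inequalities are preserved, without superfluous hypotheses; and (ii) the asymptotic analysis at the cylindrical end needed to make the convexity argument rigorous, since $\mathcal{M}$ is a renormalized, not absolutely convergent, quantity, the fields are only asymptotically extreme-Kerr there, and one must control the boundary terms that produce $A$ and $J$. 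A first fully explicit test of the whole scheme is the spinning Bowen--York family, where all the objects above are computable.
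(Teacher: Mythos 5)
There is a genuine gap: the statement you are asked about is a \emph{conjecture} in the paper, and neither the paper nor your proposal proves it. Your outline essentially restates the paper's own research program (reduce to extreme throat data, use the mass functional $\mf$ and the variational result of theorem \ref{t:main}, then upgrade local to global), and the two steps you yourself flag as ``genuine difficulties'' are exactly the steps the paper lists as open problems in section \ref{sec:final-comments}. Step 1 is not a routine stability/gluing argument: the paper's mechanism is an extreme limit procedure ($\mu\to 0$, $s\to\infty$) for which one must show that the limit exists, solves the constraints, keeps $J$ fixed and does \emph{not increase} the area; this is only known for Kerr and for the spinning Bowen--York family (where the monotonicity \eqref{eq:67} is proved via the maximum-principle estimate \eqref{eq:61}), and constructing it for a general apparent-horizon component is precisely the hard open problem, not something obtainable by simply attaching a cylindrical end.

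Step 3 also contains a concrete flaw as stated. The identity linking variations of $A$ to variations of $\mf$ is not a general identity between the two functionals: equation \eqref{eq:29} uses the fact that $e^{\sigma_0+q_0}=2|J|$ is \emph{constant} (equation \eqref{eq:key}), which holds only at the extreme Kerr throat. For a generic throat datum, $A=2\pi\int e^{\sigma+q}\sin\theta\,d\theta$ is not a function of $\mf=8+\frac{2}{\pi}\int(\sigma+q)\,\dv$, so even a global lower bound $\mf\geq 8(\ln(2|J|)+1)$ (which itself would have to be proved, e.g.\ by a convexity argument on the hyperbolic target, with the divergent boundary terms in \eqref{eq:49} and the linear $4\sigma$ term handled as in the mass--angular momentum proofs) does not by itself ``propagate'' to $A\geq 8\pi|J|$; an additional inequality converting the exponential integral into the linear one (e.g.\ a Jensen-type estimate) is needed and is absent from your argument. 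In short, Step 2 correctly reproduces what the paper actually establishes, but Steps 1 and 3 are declarations of intent rather than proofs, so the conjecture remains unproved by your proposal, just as it remains unproved in the paper.
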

Note that in the previous discussion  we have considered the area $A$ of the
event horizon (since we have used the area theorem). As it usual in geometrical
inequalities in order to make an useful statement we need to replace the event
horizon with a quasi-local quantity. In our case the most appropriate quantity appears
to be the apparent horizon on the initial data. Generalization of the area
theorem also holds (under appropriate assumptions) for apparent horizons (see
the review article \cite{lrr-2004-10} and reference therein).

Let us mention some support for this inequality. This inequality has been
proved for stationary black holes surrounded by matter in \cite{hennig08}
\cite{Hennig:2008zy} (also with charge). Although this case is only slightly
related with the conjecture (since the conjecture applies for non-stationary
spacetimes in vacuum) it is highly non trivial and it certainly suggests the
validity the conjecture.  It is also important to note that there is a counter
example for this inequality in the non-asymptotically flat case
\cite{Booth:2007wu}. That is, the assumption that the horizon belong to an
asymptotically flat data is essential. This counter example points out  that
although the inequality involves only quasi-local quantities is not pure
quasi-local in the sense that a global assumption should be made on the initial
data (namely, asymptotic flatness).

The purpose of this article is to present non trivial evidences for the
validity of conjecture \ref{c:1}. The main part of this evidence is a formula
that relates in a remarkable way the variations of the area and the variations
of an appropriate defined mass functional on extreme throat  initial data.
These kind of data (described in section \ref{sec:extr-cylindr-init}) isolate
the cylindrical structure of extreme black hole and hence they represent a
natural source of counter examples to the inequality \eqref{eq:16} as we will
see.  The very existence of this formula suggests that the inequality
\eqref{eq:16} should hold, at least in a relevant family of initial
conditions. Using this result we also prove the inequality in the non trivial
family of spinning Bowen-York initial data.

The plan of the article is the following. In section
\ref{sec:extr-cylindr-init} we describe extreme throat initial data. In section
\ref{sec:main-result} we present our main result, given by theorem
\ref{t:main}. The proof of this result is divided naturally in two steps,
described en section \ref{sec:mass-funct-extr} and \ref{sec:vari-area-extr}. In
section \ref{sec:mass-funct-extr} we present an appropriate mass functional for
extreme throat initial data. We calculate the first and second variations of this
functional evaluated at the extreme Kerr cylindrical initial data. These results
are analogous to the ones described for asymptotically flat axially symmetric
initial data described in \cite{Dain05c} \cite{Dain05d}. Section
\ref{sec:vari-area-extr} constitutes the most important part of the article. In
this section it is shown the relation between the variations of the mass
functional and the variations of the area. This relation was, in our opinion,
completely unexpected a priori. In section \ref{sec:aplic-spinn-bowen} we apply
this result to prove the inequality \eqref{eq:16} on the spinning Bowen-York
black hole family. We discuss the relevant open problems in section
\ref{sec:final-comments}. Finally, we conclude with an appendix in which we collect
some properties of the extreme Kerr throat initial data.
 
\section{Extreme throat  initial data set}
\label{sec:extr-cylindr-init}

In order to present our results we need to discuss first extreme throat 
initial data. The definition of this kind of initial data is motivated by the
behavior of the Kerr black hole initial data in the extreme limit. Let us
briefly review this behavior (for more details, see for example, section 2
in \cite{Dain:2010uh}).

Consider the Kerr black hole with mass $m$ and angular momentum $J$.  We define
the following parameter $\mu$ (which has unit of mass) in terms of $m$ and $J$
\begin{equation}
  \label{eq:mu}
  \mu=\sqrt{ m^2 -|J|}.
\end{equation} 
The extreme Kerr black hole corresponds to $\mu = 0$. For the Schwarzschild
black hole we have $\mu=m$.

In the standard Boyer-Lindquist coordinates for the Kerr black hole, take a
slice $t=constant$.  Let us denote by $S$ the 3-dimensional manifold defined by
that slice.  The topology of this surface is $S=S^2\times \mathbb{R}$.
The triple $(S,h_{ij}, K_{ij})$, where $ h_{ij}$ is the induced intrinsic
metric on $S$ and $ K_{ij}$ is the second fundamental form of $S$, constitute
an initial data set for Einstein equations.  That is, they are solutions of the
constraint equations
\begin{align}
 \label{const1}
   D_j   K^{ij} -  D^i   K= 0,\\
 \label{const2}
   R -  K_{ij}   K^{ij}+  K^2=0,
\end{align}
where $ {D}$ and $ R$ are the Levi-Civita connection and the Ricci scalar
associated with ${h}_{ij}$, and $ K = K_{ij} h^{ij}$. In these equations the
indices are moved with the metric $ h_{ij}$ and its inverse $ h^{ij}$.

For $\mu >0$ these data have the geometry of two asymptotically flat ends. In
the extreme limit $\mu =0$ the geometry changes. One of the asymptotic ends is
asymptotically flat but the other is cylindrical. 
Let us take a closer look at the structure of the cylindrical end.
In coordinates ($r,\theta,\phi$), the induced metric on $S$ has the form
\begin{equation}\label{thcero}
 h_{ij}=\Phi^4\tilde h_{ij},
\end{equation}
where the conformal metric $\tilde h_{ij}$ is defined by
\begin{equation}
  \label{eq:42}
  \tilde h=e^{2q}(dr^2+r^2d\theta^2)+r^2\sin^2\theta d\phi^2,
\end{equation}
and the functions $\Phi$ and $q$ are given by equations \eqref{ficero}  in  appendix
\ref{sec:extr-kerr-cylindr}. The extrinsic curvature is given by 
\begin{equation}
\label{tkcero}
 K_{ij}=\frac{2}{\eta} S_{(i} \eta_{j)}, \quad
 S_i=\frac{1}{\eta}\epsilon_{ijk}\eta^j\partial^k\omega,
\end{equation}  
where $\eta^i$ is the axial Killing vector 
\begin{equation}
  \label{eq:7}
\eta^i=\frac{\partial}{\partial \phi},  
\end{equation}
 the square of its norm $\eta$ is  given by  \eqref{eq:59bb}, $\epsilon_{ijk}$
denotes the volume element with respect to the metric $h_{ij}$ and $\omega$ is
given by \eqref{wcero}. The advantage of this particular form of writing
$K_{ij}$ is that it is easy to check from \eqref{tkcero} that $K_{ij}$
satisfies the momentum constraint \eqref{const1} (see, for example, the
appendix in \cite{Dain99b} and \cite{Dain:2010uh}). In particular, we have
that $K_{ij}$ is trace-free, namely
\begin{equation}
  \label{eq:62}
   K=0.
\end{equation}
That is, these initial data are maximal surfaces. 

In these coordinates, the asymptotically flat end of the metric \eqref{thcero}
corresponds to the limit $r\to \infty$ and the cylindrical end corresponds to
the limit $r\to 0$.  The radial coordinate $r$ is a good coordinate in the
asymptotically flat end, since the metric and the extrinsic curvature are
manifestly asymptotically flat with respect to these coordinates: they have
the standard decay to the flat metric.

On the other hand, in the limit $r \to 0$ the conformal factor $\Phi$ blows
up. This is, however, just a coordinate problem. To see this, define $s=-\ln r$,
then the cylindrical end corresponds to $s\to \infty$, and the metric has the
form
\begin{equation}
  \label{eq:2}
    h^0=(\sqrt{r}\Phi)^4\left(e^{2q}(ds^2+d\theta^2)+\sin^2\theta
     d\phi^2 \right).
\end{equation}
The functions $\sqrt{r}\Phi$ and $q$ are smooth and uniformly bounded in
the whole range $-\infty < s< \infty$.

The metric
\eqref{eq:2} and the second fundamental form \eqref{tkcero} have a well defined
limit $s\to \infty$ as initial data. For the metric $h^0$ in the limit $s\to
\infty$  we obtain 
\begin{equation}
  \label{eq:55}
  h=\varphi_0^4(e^{2q_0}(ds^2+d\theta^2)+\sin^2\theta d\phi^2).
\end{equation}
where $\varphi_0$ and $q_0$ are defined by the limits
\eqref{eq:56}--\eqref{eq:57}. The extrinsic curvature $K^{ij}$ has the form
\eqref{tkcero} where $\omega$ is replaced by its limiting value $\omega_0$
defined by \eqref{eq:58} and all the other quantities are computed with respect
to the metric \eqref{eq:55}.  These are in fact solutions of the constraint
equations \eqref{const1}--\eqref{const2} on $S^2\times \mathbb{R}$. 
We call these initial data set the extreme Kerr throat initial data set.

Let us make a summary.  The extreme Kerr throat  initial data set is constructed
out of the Kerr black hole initial data by two limits. 
The first one is the extreme limit
\begin{equation}
  \label{eq:3}
  \mu \to 0.
\end{equation}
In this limit the geometry of the Kerr black hole initial data changes from two
asymptotically flat ends to one asymptotically flat and one cylindrical. The
second limit is
\begin{equation}
  \label{eq:15}
  s\to \infty. 
\end{equation} 
This limit isolates the cylindrical structure of the extreme Kerr initial data
cutting off the asymptotically flat end.

This procedure of taking the extreme limit can be perform for more generic data
(see \cite{Dain:2008yu}, \cite{gabach09}). And the behavior is identical,
although, of course, one ends up with a different extreme throat initial data.  

We isolate the properties of generic extreme throat initial data in the following
definition. Consider the following Riemannian metric
\begin{equation}
  \label{eq:1}
  h=\varphi^4(e^{2q}(ds^2+d\theta^2)+\sin^2\theta d\phi^2),
\end{equation}
were the functions $\varphi$ and $q$ depend only on $\theta$. 
We assume that $\varphi$ and $q$ satisfy the following equation
\begin{equation}
  \label{eq:2b}
  \Ls\varphi -\frac{1}{4}(1-\partial^2_\theta
  q)\varphi=-\frac{|\partial_\theta \omega|^2}{16\sin^4\theta \varphi^7},
\end{equation}
where $\Ls$ is the Laplace operator in $S^2$ with respect to the standard
metric acting on axially symmetric functions, namely
\begin{equation}
  \label{eq:17}
   \Ls\varphi = \frac{1}{\sin\theta} \partial_\theta
   \left(\sin\theta \partial_\theta \varphi  \right).
\end{equation}

Finally, for convenience, we define out of $\varphi$ two additional functions,
$\sigma$ and $\eta$, as follows
\begin{equation}
  \label{eq:40}
  \varphi^4=e^\sigma,\quad  \eta=\sin^2\theta \varphi^4. 
\end{equation}
The function $\eta$ is the square of the norm of the Killing vector (\ref{eq:7})
 with respect to the metric \eqref{eq:1}.
 
With these ingredients, we can formulate the following definition. 
\begin{definition}
  Consider a set of functions $(\sigma,\omega, q)$ (depending only on $\theta$)
  that satisfy equation \eqref{eq:2b} on $S^2$ and such that $q$ vanished at
  the poles $\theta=0,\pi$. Then, an extreme throat initial data set is a
  triple $(S, h_{ij}, K_{ij})$ where $S=\mathbb{R}\times S^2$, $h_{ij}$ is
  given by \eqref{eq:1} and $K_{ij}$ is constructed from $\omega$ by the
  formula \eqref{tkcero}. In equation \eqref{tkcero}, the volume element
  $\epsilon_{ijk}$ is calculated with respect to the metric \eqref{eq:1}, the
  vector $\eta^i$ is given by (\ref{eq:7}) and the indices are moved  with
  the metric \eqref{eq:1}.
\end{definition}

From the definition it follows that the data satisfy the constraint equations
(\ref{const1})--(\ref{const2}), since equation \eqref{eq:2b} is just the
Lichnerowicz equation for the conformal factor $\varphi$ with respect to the
conformal metric
\begin{equation}
  \label{eq:19}
  \tilde h =e^{2q}(ds^2+d\theta^2)+\sin^2\theta d\phi^2).
\end{equation}
Note that the Ricci scalar of $\tilde h$ is given by
\begin{equation}
  \label{eq:68}
  \tilde R= 2e^{-2q}(1-\partial^2_\theta q).
\end{equation}
For a discussion on the Lichnerowicz equation and the conformal method see, for
example, the review article \cite{Bartnik04b}.

The vector $\eta^i$ is a Killing vector of the metric $h_{ij}$ 
\begin{equation}
  \label{eq:12}
   \pounds_\eta h_{ij}=0,
\end{equation}
where $\pounds$ denotes Lie derivative.   Moreover, the
requirement that $q$ vanishes at the poles arises from the regularity of the
metric at the axis, namely the condition
\begin{equation}
  \label{eq:8}
  \lim_{\theta\to 0,\pi} \frac{\partial_i\eta \partial^i \eta}{4\eta}=1.
\end{equation}
Hence, the metric $h_{ij}$ is axially symmetric (see \cite{Dain06c}
\cite{Chrusciel:2007dd} for a discussion of axial symmetry on initial data and
also \cite{stephani03} for a general discussion of axial symmetry in General
Relativity). 

From the definition of $K_{ij}$ it is also clear that $\eta^i$ is a symmetry of
$K_{ij}$, namely
\begin{equation}
  \label{eq:11}
  \pounds_\eta K_{ij}=0.
\end{equation}
In our definition, we have made for simplicity the assumption that $\eta^i$ is
hypersurface orthogonal (with respect to the metric $h_{ij}$). We expect that
all the results obtained in this article are also valid without this
assumption, but this analysis remains to be done.

The metric $h_{ij}$ has another symmetry, namely the vector $\xi^i$ defined by
\begin{equation}
  \label{eq:13}
  \xi^i=\frac{\partial }{\partial s}.
\end{equation}
It straightforward to check that $\xi^i$ is also a symmetry of $K_{ij}$
\begin{equation}
  \label{eq:66}
   \pounds_\xi K_{ij}=0.
\end{equation}
Also, the vectors $\eta^i$ and $\xi^i$ commute. 

Riemannian metrics of the form \eqref{eq:1} are generically called
cylindrically symmetric. In addition, we have equations (\ref{eq:11}) and
(\ref{eq:66}) and hence at first sight it looks appropriate to call the whole
initial data set cylindrically symmetric. However this terminology is
misleading for the following reason: in general, the spacetime originated from
this kind of data will not be cylindrically symmetric. Recall that a spacetime
is cylindrically symmetric if it admits two spacelike commuting Killing vectors
(see \cite{stephani03}). Since the problem of cylindrically symmetric
spacetimes has been frequently analyzed in the literature, it is important to
discuss this point in detail.

The vectors $\eta^i$ and $\xi^i$ are Killing vectors of $h_{ij}$ and we also
have equations (\ref{eq:11})--(\ref{eq:66}), hence it follows from the results
of \cite{Moncrief75} that the development of this class of initial data will be
a spacetime with, at least, two Killing vectors.  The projection of the
spacetime Killing vectors to the initial surface are given by $\eta^i$ and
$\xi^i$ (see \cite{beig97} for a discussion about the relation of spacetime
symmetries and symmetries on the initial data).

These data constitute initial data for the axially symmetric vacuum Einstein
equations (see, for example, \cite{Dain:2008xr} \cite{dain10}), hence it
follows that the spacetime will be axially symmetric.  In particular, the
spacetime Killing vector $\eta^\mu$ corresponding to $\eta^i$ will be spacelike
outside the axis.

However, although the spacetime will have another symmetry it will not be
cylindrically symmetric, because the extra symmetry will not be, in general,
spacelike.  The behavior of the spacetime Killing vector $\xi^\mu$ originated
from the initial data symmetry $\xi^i$ is clearly illustrated in the following
explicit example which is also interesting by itself.

Consider the following 4-dimensional metric in coordinates $(t,s,\theta,\phi)$
\begin{multline}
  \label{eq:74}
  g= \frac{(1+\cos^2\theta)}{2}
  \left[-\frac{e^{-2s}}{r^2_0} dt^2  +r^2_0(ds^2+d\theta^2) \right]+\\
\eta_0 \left(d\phi
    +  \frac{e^{-s}}{r^2_0}  dt \right)^2,  
\end{multline}
where $r^2_0=2|J|$ and $\eta_0$ is given by (\ref{eq:6}). 
This metric was introduced in \cite{Bardeen:1999px} as the 
extreme Kerr throat geometry. It characterizes the spacetime geometry near the
horizon of the extreme Kerr black hole. 

It can be easily verified that the extreme Kerr throat initial data given by
equations (\ref{eq:55}) and (\ref{tkcero}) are the initial data of the metric
(\ref{eq:74}) in a surface $t=constant$. The spacetime Killing vectors of the
metric $g$ which correspond to the initial data Killing vectors $\eta^i$ and
$\xi^i$ are given by
\begin{equation}
  \label{eq:75}
  \eta^\mu=\frac{\partial }{\partial \phi},   \quad \xi^\mu =t\frac{\partial
  }{\partial t}-  \frac{\partial }{\partial s}. 
\end{equation}
The metric has also two more Killing vectors (see \cite{Bardeen:1999px})
\begin{equation}
  \label{eq:76}
  \xi_1^\mu = \frac{\partial }{\partial t},\quad   \xi_2^\mu=
  \left(\frac{e^{-2s}}{2} +\frac{t^2}{2} \right)    \frac{\partial
  }{\partial t}- t\frac{\partial }{\partial s}-
e^{-s}\frac{\partial }{\partial \phi}.
\end{equation}
We see that the Killing vector $\xi^\mu$ is not spacelike everywhere. In
particular, the metric $g$ is not cylindrically symmetric.

Finally, let us mention that there are two important physical quantities
defined on a extreme throat initial data. First, the angular momentum given by
\begin{equation}
  \label{eq:41}
  J=\frac{1}{8} \left(\omega(\pi)-\omega(0)\right).
\end{equation}
This formula follows from the expression of the angular momentum for standard
asymptotically flat 
axially symmetric initial data (see, for example, \cite{Dain06c}). 
Second, the area of the cylinder
\begin{equation}
  \label{eq:26}
  A =2\pi \int_0^\pi e^{\sigma+q} \sin\theta \, d\theta.
\end{equation}

\section{Main result}
\label{sec:main-result}
The extreme limit procedure \eqref{eq:3} and \eqref{eq:15} that lead to the
extreme throat initial data for the Kerr black hole discussed in the previous
section \ref{sec:extr-cylindr-init} has an additional, remarkable property.
The area of the extreme cylinder (with value $A=8\pi|J|$) is smaller that the
minimal surface area of any non-extreme Kerr black hole initial data (recall
that the angular momentum $J$ is kept fixed). In fact, the area of the minimal
surface is a monotonically decreasing function with respect to $\mu$. This can
be, of course, trivially verified since for the Kerr black hole we have the
explicit expression for $A$ in terms of $\mu$.

As we have pointed out, this extreme limit can be performed for other class of
initial data, like the Bowen-York black hole initial data showed in section
\ref{sec:aplic-spinn-bowen}. It is conceivable (but it certainly remain to be
shown) that there exists such procedure for general black hole initial data
in axial symmetry, or at least for a relevant family of initial data. Let us
assume that this is the case. That is, let as assume that for an initial data
with an horizon of area $A_1$ we can perform the limit procedure to obtain an
extreme  throat  initial data of area $A$, with $A\leq A_1$. Then, if
inequality \eqref{eq:16} is true, it should also holds for the extreme
throat initial data. Our main result indicates that this is precisely the case.
This result is summarized in the following theorem.

\begin{theorem}
\label{t:main}
  Let us consider families of extreme throat initial data with fixed angular
  momentum $J$. Then, the area on these families satisfy the
  following properties:
\begin{itemize}
\item The first variation of the area is zero evaluated on the extreme Kerr
throat initial data.

\item The second variation of the are is positive evaluated  on  the extreme Kerr
  throat initial data. 
\end{itemize}
  
\end{theorem}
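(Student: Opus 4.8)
The plan is to reduce the two statements to a constrained variational problem for a suitable functional and then to show that the area $A$ of an extreme throat initial data set, written as in \eqref{eq:26}, is tied to that functional in a way that transfers its critical-point and convexity properties. First I would introduce the mass functional $\mathcal{M}$ for extreme throat initial data announced in section \ref{sec:mass-funct-extr}; following the analogous construction for asymptotically flat axisymmetric data in \cite{Dain05c}\cite{Dain05d}, this will be a functional of $(\sigma,\omega,q)$ whose Euler--Lagrange equation, with $J$ held fixed via the boundary condition \eqref{eq:41}, is precisely the field equation \eqref{eq:2b} together with the equation for $\omega$, so that the extreme Kerr throat data $(\sigma_0,\omega_0,q_0)$ is a stationary point. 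The known results to be imported (the first and second variation computations of section \ref{sec:mass-funct-extr}) are: $\delta\mathcal{M}=0$ at the extreme Kerr throat data among variations preserving $J$, and $\delta^2\mathcal{M}\ge 0$ there, with the second variation controlled by a coercive quadratic form in the perturbations of $\sigma$ and $\omega$ (the $q$-direction being handled by the poles condition).

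The heart of the argument — and, as the introduction flags, the unexpected input of section \ref{sec:vari-area-extr} — is an identity relating $\delta A$ and $\delta^2 A$ to $\delta\mathcal{M}$ and $\delta^2\mathcal{M}$. Concretely I would compute the first variation of $A=2\pi\int_0^\pi e^{\sigma+q}\sin\theta\,d\theta$ directly, getting $\delta A = 2\pi\int_0^\pi e^{\sigma+q}(\delta\sigma+\delta q)\sin\theta\,d\theta$, and then use the field equation \eqref{eq:2b} (rewritten in terms of $\sigma$ via \eqref{eq:40}) to eliminate the $\delta q$ contribution and the second-order terms in favour of the integrand of $\delta\mathcal{M}$. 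The anticipated outcome is a clean formula of the schematic form ``$\delta A + (\text{const})\,\delta\mathcal{M} = 0$'' at first order and ``$\delta^2 A = (\text{positive multiple})\,\delta^2\mathcal{M} + (\text{manifestly nonnegative terms})$'' at second order, evaluated at the extreme Kerr throat data. Because $\delta\mathcal{M}=0$ there, the first bullet follows immediately; because $\delta^2\mathcal{M}>0$ on the relevant space of nontrivial $J$-preserving perturbations, the second bullet follows as well.

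To organize the second-variation step I would (i) parametrize a family of extreme throat data $\tau\mapsto(\sigma(\tau),\omega(\tau),q(\tau))$ with $J$ fixed and $(\sigma(0),\omega(0),q(0))$ the extreme Kerr throat data; (ii) differentiate $A(\tau)$ twice in $\tau$, keeping careful track that each member of the family solves \eqref{eq:2b}, which provides an algebraic relation used to trade $\ddot q$ and $\dot q^2$ terms against $\dot\sigma,\dot\omega$ terms; (iii) invoke the sign-definiteness of $\delta^2\mathcal{M}$ from section \ref{sec:mass-funct-extr}. The main obstacle I expect is precisely step (ii): the constraint \eqref{eq:2b} couples $q$ to $\sigma$ and $\omega$ in a nonlinear way, so extracting the area's second variation purely in terms of the coercive quadratic form of $\mathcal{M}$ — rather than being left with indefinite cross terms — is delicate, and this is where the ``completely unexpected'' cancellation of section \ref{sec:vari-area-extr} must do its work. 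A secondary subtlety is ensuring the perturbations remain admissible (in particular $q$ vanishing at the poles and finiteness of all the integrals involved), so that the functional-analytic statement ``$\delta^2 A>0$ on nontrivial directions'' is not vacuous; this is where the boundary behaviour of the extreme Kerr throat data collected in the appendix enters.
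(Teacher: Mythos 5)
Your overall strategy is the same as the paper's: establish the critical-point and positivity properties of the mass functional (lemma \ref{l:1}) and then transfer them to the area \eqref{eq:26} through the constraint \eqref{eq:2b}. However, the decisive step is left as an anticipated ansatz, and the mechanism you sketch for it would not close as stated. The constraint does not allow you to ``eliminate the $\delta q$ contribution'' at the level of integrands: rewritten in terms of $\sigma$ it is \eqref{eq:69}, which controls $\partial^2_\theta q$, not $q$. What actually does the work is a single global identity: integrating \eqref{eq:69} over $S^2$ and integrating by parts twice, using that $q$ vanishes at the poles, gives $\mf=8+\frac{2}{\pi}\int_{S^2}(\sigma+q)\,\dv$ (equation \eqref{eq:23}), valid along any family of extreme throat data. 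Differentiating this identity yields $\int(\sigma'+q')\,\dv=\frac{\pi}{2}\mf'$ and $\int(\sigma''+q'')\,\dv=\frac{\pi}{2}\mf''$; in particular no delicate trading of $\ddot q$ or $\dot q^2$ terms against indefinite cross terms is needed, since $q''$ enters only linearly and the term $(\sigma'+q')^2$ is kept as a manifestly nonnegative contribution.

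The genuine gap is that your schematic formulas ``$\delta A+c\,\delta\mf=0$'' and ``$\delta^2A=c\,\delta^2\mf+(\text{nonnegative terms})$'' require a fact you never invoke: on the extreme Kerr throat data the weight in the area integrand is constant, $e^{\sigma_0+q_0}=2|J|$ (equation \eqref{eq:key}). The first variation of the area is the weighted integral $A'=\int_{S^2}(\sigma'+q')e^{\sigma+q}\,\dv$ (equation \eqref{eq:27}); knowing only that the unweighted integral $\int(\sigma'+q')\,\dv$ vanishes at $\epsilon=0$ gives no conclusion about $A'$ unless the weight can be pulled out of the integral, and the same issue arises for $A''$ in \eqref{eq:28}. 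It is precisely \eqref{eq:key} that converts \eqref{eq:27} into \eqref{eq:29} and thereby links the area variations to $\mf'$ and $\mf''$; without it, the proportionality between $\delta A$ and $\delta\mf$ that you posit as the outcome is exactly the statement that remains to be proved. To complete your argument you must therefore (i) derive the integrated identity \eqref{eq:23} using the pole condition on $q$, and (ii) verify from the appendix formulas \eqref{eq:35b} and \eqref{eq:57} that $e^{\sigma_0+q_0}$ is the constant $2|J|$, after which the two bullet points follow exactly as in section \ref{sec:vari-area-extr}.
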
  

This theorem strongly suggests that the area is an absolute minimum for extreme
Kerr throat initial data among all the extreme throat initial data with the
same angular momentum. Since for extreme Kerr we have $A=8\pi |J|$, the
inequality \eqref{eq:16} is satisfied for general extreme throat initial data.
In order to prove that, we can follow a similar line as in \cite{Dain05d} to
prove that it is a local minimum and to \cite{Dain06c} \cite{Costa:2009hn}
\cite{Chrusciel:2009ki} \cite{Chrusciel:2007ak} to prove that it is in fact a
global minimum.  It appears that the same analysis will go throw without major
difficulties.  This however should be checked and it will be done in a
subsequent work.

Theorem \ref{t:main} gives also strong evidences in favor to inequality
\eqref{eq:16}. Namely, if this inequality were false, there is no reason to
expect that it will hold on extreme throat initial data. As it have been pointed out
above, this theorem suggest also an strategy to prove the conjecture: given an
initial data with an apparent horizon construct a limit procedure analogous to
\eqref{eq:3} and \eqref{eq:15} in such a way that i) in the limit an extreme
throat initial data set is obtained and ii) the area of the extreme throat
initial data is less or equal than the area of the horizon.  In fact, in
section \ref{sec:aplic-spinn-bowen} we construct this limit procedure for the
spinning Bowen-York family of initial data.

The proof of theorem \ref{t:main} is naturally divided in two parts, presented
in sections \ref{sec:mass-funct-extr} and \ref{sec:vari-area-extr}
respectively.

\section{The mass functional for extreme throat initial data}
\label{sec:mass-funct-extr}
An extreme throat  initial data are stationary if the following equations are
satisfied
\begin{align}
  \label{eq:18}
  \Delta_0\sigma-2 & =-\frac{|\partial_\theta \omega|^2}{\eta^2}\\
  \label{eq:20}
  \Delta_0\omega  & =2\frac{\partial_\theta\omega\partial_\theta\eta}{\eta}.
\end{align} 
The fact that these equations for an extreme throat initial data define
stationary solutions can be deduced from the standard stationary axially
symmetric equations.  However, for our present purpose, the only property of
equations \eqref{eq:18}--\eqref{eq:20} that we  will use is that the extreme
 Kerr throat initial data (defined by \eqref{eq:35b}--\eqref{eq:58}) are a 
solution of them. This can be easily checked explicitly.

Equation \eqref{eq:20} can be written in divergence form as follows
\begin{equation}
  \label{eq:21}
  \partial_\theta \left(\sin\theta \frac{\partial_\theta \omega}{\eta^2}
  \right)=0.
\end{equation}
The stationary equations can be written in a natural form
as equations on the unit sphere $S^2$ with the standard metric. Namely, let
$D_A$ be the covariant derivative with respect to the standard metric in
$S^2$. Then, equations \eqref{eq:18}--\eqref{eq:20} are written as
\begin{align}
  \label{eq:43}
  D_A D^A\sigma -2 & =\frac{D_A\omega D^A\omega}{\eta^2},\\
D_A\left(\frac{D^A\omega}{\eta^2}\right) & =0.  \label{eq:43f}
\end{align}
These expression were defined for axially symmetric functions, but they
also make sense for functions which depends on $\phi$. In fact, in all the
results that follows we will not use the assumption that the functions are
axially symmetric (this is very similar to what happens in the study of
the inequality \eqref{eq:4} discussed in \cite{Dain06c}).

We define the following functional
\begin{equation}
  \label{eq:22}
 \mf= \int_0^\pi\left( |\partial_\theta \sigma|^2 +4\sigma + \frac{|\partial_\theta
    \omega|^2}{\eta^2}\right) \sin\theta \, d\theta.
\end{equation}
On the unit sphere, using the notation of equations
\eqref{eq:43}--\eqref{eq:43f} this functional is written as
\begin{equation}
  \label{eq:44}
  \mf=\frac{1}{2\pi} \int_{S^2}\left( |D \sigma|^2 +4\sigma + \frac{|D
      \omega|^2}{\eta^2}\right)  \, \dv,  
\end{equation}
were $\dv=\sin\theta\, d\theta d\phi$ is the volume element of the standard
metric in $S^2$.  This functional is the obvious translation of the mass
functional used in \cite{Dain05c} adapted to this kind of  initial
data. 

Let us make some general comments regarding the functional $\mf$ which are not
directly relevant for the present article but they can have interesting future
applications. It is very likely that for non-stationary initial data the
functional $\mf$ represents a lower bound for another mass functional $\mf'$
which includes the time dependent terms. This is what happens with the
functional considered in \cite{Dain05c}. When the complete spacetime is
considered (and not just the initial data), this new functional is precisely
the total energy (the ADM mass) of axially symmetric, asymptotically flat
spacetimes, and it is conserved (see \cite{Dain:2008xr}). In the present case,
the mass functional $\mf'$ will describe the total energy of the class of
spacetime discussed in section \ref{sec:extr-cylindr-init}.  Namely, axially
symmetric spacetimes which has another Killing vector.  These spacetimes are
not asymptotically flat.  An analog situation occur for cylindrical symmetric
spacetimes, for which the total energy can be defined (see
\cite{Ashtekar:1996cd} and reference therein). We emphasize however that the
situation here is more complicated since the extra Killing vector is not
spacelike everywhere. It would be very interesting to explore this issue and
construct explicitly the functional $\mf'$.

Relevant for our present purpose, are the following two important properties of
the mass functional \eqref{eq:22}. We will prove them in in lemma \ref{l:1}.
The first one is that the stationary equations are the Euler-Lagrange equations
of this functional.  That is, the extreme Kerr throat initial data are critical
points of this functional.  The second property is that the second variation of
this functional evaluated at the extreme Kerr throat initial data is positive.
That suggests that extreme Kerr throat initial data set is in fact a minimum of
this functional.  These properties can be expected from the analysis developed
in \cite{Dain05c} and \cite{Dain05d}, since the functional $\mf$ is the natural
generalization of the mass functional used in these articles adapted to
cylindrical initial data.

Before proving that lemma is important to make the connexion between the
functional $\mf$ and the energy of harmonic maps between $S^2$ and
$H^2$. Namely, consider the functional
\begin{equation}
  \label{eq:47}
\tilde  \mf_\Omega=  \frac{1}{2\pi} \int_\Omega \frac{|\partial \eta|^2+|\partial
    \omega|^2}{\eta^2}   \, \dv,
\end{equation}
defined on some domain $\Omega\subset S^2$, such that $\Omega$ does not include
the poles. Integrating by parts and using  the
identity
\begin{equation}
  \label{eq:48}
\Ls(\log(\sin\theta))=-1,  
\end{equation}
we obtain the following relation between $\mf$ and $\mf'$
\begin{multline}
  \label{eq:49}
  \tilde\mf_\Omega= \mf_\Omega+4 \int_\Omega \log\sin\theta\, dS+ \\
\oint_{\partial
    \Omega} (4\sigma + \log\sin\theta) \frac{\partial \log\sin\theta}{\partial
    n}\, \ds,
\end{multline}
where $n$ denotes the exterior normal to $\Omega$, $\ds$ is the surface element
on the boundary $\partial \Omega$ and we have used the obvious notation
$\mf_\Omega$ to denote the mass functional \eqref{eq:44} defined over the
domain $\Omega$. The difference between $\mf$ and $\mf'$ are the boundary
integral plus the second term which is just a numerical constant. Note that if
we integrate over $S^2$ this constant term  is finite
\begin{equation}
  \label{eq:50}
   \int_\Omega \log\sin\theta\, dS=2\log2-2.
\end{equation}
The boundary terms however diverges at the poles. 

In an analogous way as it was described in \cite{Dain06c}, the functional
$\mf'$ defines an energy for maps $(\eta,\omega):S^2\to \mathbb{H}^2$ where
$\mathbb{H}^2$ denotes the hyperbolic plane $\{(\eta, \omega ) : \eta > 0\}$,
equipped with the negative constant curvature metric
\begin{equation}
  \label{eq:51}
  ds^2=\frac{d\eta^2+d\omega^2}{\eta^2}. 
\end{equation}
The Euler-Lagrange equations for the energy $\mf'$ are called harmonic maps
from $S^2\to \mathbb{H}^2$. Since $\mf$ and $\mf'$ differ only by a constant
and boundary terms, they have the same Euler-Lagrange equations.

We present in the following lemma the main result of this section. 
\begin{lemma}
\label{l:1}
  Let us consider families of extreme throat initial  data with fixed angular
  momentum $J$. Then, the area on these families satisfy the
  following properties:
\begin{itemize}
\item The first variation of $\mf$ is zero evaluated on  the extreme Kerr
  throat initial 
  data.

\item The second variation of $\mf$ is positive evaluated on the extreme Kerr
  throat initial data.
\end{itemize}
  
\end{lemma}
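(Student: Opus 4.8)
The approach is to compute the first and second variations of the mass functional $\mf$ in \eqref{eq:22} directly, by perturbing the pair $(\sigma,\omega)$ to $(\sigma+\epsilon\alpha,\,\omega+\epsilon\beta)$ with $\alpha,\beta$ smooth functions; the functional $\mf$ does not involve $q$, so only these fields matter, and keeping the angular momentum \eqref{eq:41} fixed only requires $\beta(0)=\beta(\pi)$. (The same computation carried out in the form \eqref{eq:44} covers perturbations that are not axially symmetric, cf.\ the remark following \eqref{eq:43f}.)

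First I would do the first variation. Differentiating \eqref{eq:22} at $\epsilon=0$ and recalling $\eta=\sin^2\theta\,e^{\sigma}$, so that $\partial_\epsilon(\eta^{-2})|_{0}=-2\alpha\,\eta^{-2}$, an integration by parts in $\theta$ --- with no boundary contribution at $\theta=0,\pi$ because $\sin\theta\,\partial_\theta\sigma$ and $\sin\theta\,\partial_\theta\omega/\eta^2$ are bounded at the axis by the regularity of the data and because $\beta(0)=\beta(\pi)$ --- yields
\[
 \left.\frac{d\mf}{d\epsilon}\right|_{0}=-2\int_0^\pi\!\left[\left(\Delta_0\sigma-2+\tfrac{|\partial_\theta\omega|^2}{\eta^2}\right)\alpha+\partial_\theta\!\left(\tfrac{\sin\theta\,\partial_\theta\omega}{\eta^2}\right)\tfrac{\beta}{\sin\theta}\right]\sin\theta\,d\theta .
\]
Hence the Euler--Lagrange equations of $\mf$ are precisely the stationary equations \eqref{eq:18} and \eqref{eq:21} (equivalently \eqref{eq:20}). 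Since the extreme Kerr throat data are a solution of these, the first variation of $\mf$ vanishes there.

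For the second variation I would exploit the harmonic map structure recalled just before the lemma. By the identity \eqref{eq:49}, taken over all of $S^2$, the functional $\tilde\mf$ of \eqref{eq:47} differs from $\mf$ only by a numerical constant and by a boundary term which is affine--linear in $\sigma$; consequently $\delta^2\tilde\mf=\delta^2\mf$ on the relevant class of variations. Now $\tilde\mf$ is, up to the overall factor, the Dirichlet energy of the map $u=(\eta,\omega):S^2\to\mathbb{H}^2$ with the metric \eqref{eq:51}, and the extreme Kerr throat data correspond to a harmonic map $u_0$. The index form of this energy at $u_0$ has the schematic form $\delta^2 E(u_0)(V,V)=\int_{S^2}\left(|\nabla V|^2-\langle R^{\mathbb{H}^2}(V,du_0)\,du_0,\,V\rangle\right)\dv$, and since $\mathbb{H}^2$ has strictly negative curvature the curvature term is nonnegative, so $\delta^2E\ge0$. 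This is the exact analogue of the computation performed for asymptotically flat axially symmetric data in \cite{Dain05c,Dain05d,Dain06c}, whose argument I would transcribe; equivalently, positivity can be read off from the convexity of the hyperbolic distance along the geodesic homotopy $t\mapsto(\eta\,e^{t\alpha},\,\omega+t\beta)$ (a Carter--Robinson type identity), and strictness follows because a zero mode would have to be a parallel section $V$ of $u_0^{*}T\mathbb{H}^2$ with $du_0$ proportional to $V$ everywhere, which is incompatible with the explicit nonconstant map $u_0$ together with the boundary conditions at the poles.

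The main obstacle is not this formal algebra but the functional--analytic bookkeeping at the poles, exactly as in \cite{Dain05d}: one must identify precisely the class of variations compatible with regularity of the metric at the axis and with fixed $J$, and verify on it that (i) the boundary term in \eqref{eq:49} is genuinely affine in the perturbation, so that it drops out of $\delta^2\mf$; (ii) the index form above is finite; and (iii) its only zero mode is $V\equiv0$. One must also keep $\eta=\sin^2\theta\,e^{\sigma+\epsilon\alpha}>0$, so that the perturbed map remains valued in $\mathbb{H}^2$; this is automatic for $|\epsilon|$ small. Beyond these points, the proof is a direct adaptation of \cite{Dain05d} and \cite{Dain06c}.
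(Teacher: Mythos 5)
Your first-variation argument is exactly the paper's: perturb $(\sigma,\omega)$ linearly with $\bar\omega$ vanishing at the poles, integrate by parts, recognize the Euler--Lagrange equations as the stationary system \eqref{eq:43}--\eqref{eq:43f}, and use that extreme Kerr throat data solve it. For the second variation you take a genuinely different route. The paper never passes to the harmonic-map energy $\tilde\mf$: it computes $\vi''(\epsilon)$ explicitly (equation \eqref{eq:44b}) and then invokes Carter's pointwise identity \eqref{eq:9b}, valid for arbitrary functions, whose integral over $S^2$ (the divergence $H$ drops out by the boundary condition on $\bar\omega$) yields $\vi''(0)=\int_{S^2}F\,\dv$ with $F$ an explicit sum of squares; the terms multiplying $G_\sigma$ and $G_\omega$ vanish at $\epsilon=0$ precisely because extreme Kerr is stationary. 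This is self-contained and works directly with the finite functional $\mf$. Your route instead quotes the abstract index form for harmonic maps into the nonpositively curved $\mathbb{H}^2$, which is conceptually the same structural fact that the Carter identity encodes, but it buys generality at the cost of a real technical complication that you should not elide: the identity \eqref{eq:49} cannot be ``taken over all of $S^2$'' as stated, since $\tilde\mf_{S^2}$ is infinite and, as the paper notes after \eqref{eq:50}, the boundary terms diverge at the poles (the divergence sits in the field-independent piece $\log\sin\theta\,\partial_n\log\sin\theta$). So the claim $\delta^2\tilde\mf=\delta^2\mf$ must be obtained by exhausting $S^2$ with domains $\Omega$ excluding polar caps, checking that the $\sigma$-dependent boundary contribution is linear in $\epsilon$ and converges, and that the index form itself is finite for the admissible perturbations (which requires control of $\bar\omega/\eta$ near the axis); you correctly list these as the remaining work, but they are exactly what the Carter-identity computation avoids. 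Finally, note that your strictness argument (ruling out zero modes via parallel sections) goes beyond what the paper's proof actually establishes: equation \eqref{eq:54} only gives $\vi''(0)\geq 0$, so if you want the strict inequality claimed in the lemma you would indeed need to carry out that kernel analysis, in the spirit of \cite{Dain05d}.
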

\begin{proof}
  The proof follows very similar lines as the one presented in \cite{Dain05d}. The only
  difference is the presence of an extra term in $\mf$, the one containing
  $\sigma$. But this term, since it is linear, makes no contribution to the
  second variation which is the delicate part of the proof.

To define the variations, let us consider  the real-valued function
$\vi(\epsilon)$  
defined by 
\begin{equation}
  \label{eq:19b}
\vi(\epsilon)= \mf(\sigma(\epsilon),\omega(\epsilon)),
\end{equation}
where 
\begin{equation}
  \label{eq:35}
\sigma(\epsilon)=\sigma_0+\epsilon\vv,\quad \omega(\epsilon)=\omega_0+
\epsilon\vY. 
\end{equation}
We assume that $\bar \omega$ vanished at the poles
$\theta=0,\pi$. This boundary condition keeps fixed the angular momentum under
the variations. In analogous way we define
\begin{equation}
  \label{eq:33b}
 \eta(\epsilon)=\sin^2\theta e^{\sigma(\epsilon)}.  
\end{equation}

The  first derivative of $\vi(\epsilon)$ with respect to $\epsilon$ is
given by 
\begin{multline}
  \label{eq:43b}
\vi'(\epsilon) = \frac{1}{\pi}\int_{S^2}
  \left\{D_A \sigma D^A \vv + 2\vv+ \right. \\
\left. +\left ( D_A \omega  D^A \vY -\vv  |D \omega|^2
\right)\eta^{-2}\right\} \dv, 
\end{multline}
where a prime denote derivative with respect to $\epsilon$ and the $\epsilon$
dependence in the right-hand side of \eqref{eq:43b} is encoded in the functions
$\sigma(\epsilon),\omega(\epsilon),\eta(\epsilon)$ defined by
\eqref{eq:35}--\eqref{eq:33b}. If we evaluate at $\epsilon=0$, integrate by
parts and use the condition that $\bar \omega$ vanished at the poles we obtain
the Euler-Lagrange equations \eqref{eq:43}--\eqref{eq:43f}. Since extreme Kerr
is a solution of this equation the first item in the Lemma is proved.

The second derivative of $\vi$ is given by
\begin{multline}
  \label{eq:44b}
    \vi''(\epsilon)= \frac{1}{16\pi}\int_{S^2}
  \left\{| D \vv |^2 + \right.\\
\left. +\left(2\vv^2|D \omega|^2
 - 4\vv D_A \omega D^A \vY + |D \vY|^2
\right)\eta^{-2}\right\} \dv.  
\end{multline}
From equation \eqref{eq:44b}, it is far from obvious that the second variation
evaluated at the critical point $\epsilon=0$ is positive definite.  In order to
prove that, the key ingredient is the following remarkable identity proved by
Carter \cite{Carter71}. In terms of our variables it has the following form
\begin{equation}
  \label{eq:9b}
F + \vv G'_{\sigma}+\vY G'_{\omega}+2\vv\vY G_{\omega} - \eta^{-2}\vY^2 G_{\sigma} =H
\end{equation}
 where
\begin{align}
  \label{eq:200}
G_\sigma(\epsilon) & = \Delta \sigma + \eta^{-2}  |D \omega|^2 -2,\\
G_\omega(\epsilon) &= D_A( \eta^{-2}D^A \omega), \label{eq:200b}
\end{align}
the derivatives with respect to $\epsilon$ of these functions are given
\begin{align}
  \label{eq:101}
  G'_\sigma(\epsilon) & =\Delta \vv + \left( 2D_A \vY D^A \omega -2\vv  |D
    \omega|^2 
  \right)\eta^{-2},\\  
  G'_\omega(\epsilon) &= D_A \left(\eta^{-2} \left(D^A \vY-2\vv D^A \omega\right)
  \right), \label{eq:102}
\end{align}
the positive definite function $F$ is given by 
\begin{multline}
  \label{eq:26b}
F(\epsilon)=\left(D\vv+ \vY \eta^{-2} D \omega \right )^2 +
\left(D( \vY  \eta^{-1})- \eta^{-1} \vv D \omega \right)^2\\
+\left( \eta^{-1} \vv  D \omega  - \vY \eta^{-2} D \eta \right)^2,
\end{multline}
and the divergence term $H$ is given by
\begin{equation}
  \label{eq:52}
H= D_A \left(\vv D^A \vv +\vY \eta^{-1}   D^A \left(  \vY \eta^{-1}
  \right)\right),     
\end{equation}
The identity \eqref{eq:9b} is valid for arbitrary functions
$\sigma,\omega,\vv,\vY$ and it is straightforward to check although the
computations are lengthy.

Note that using \eqref{eq:101}--~\eqref{eq:102} and integrating by parts we
obtain
\begin{equation}
  \label{eq:11b}
-\int_{S^2} \left( \vv G'_{\sigma}(\epsilon)+\vY G'_{\omega}(\epsilon) \right)
\dv =\pi \vi''(\epsilon). 
\end{equation}  

We integrate on $S^2$ the identity \eqref{eq:9b}. The divergence term $H$
vanished (here we use again the boundary condition). We use \eqref{eq:11b} to
obtain
\begin{equation}
  \label{eq:53}
  \vi''(\epsilon)= \int_{S^2}F \, \dv+ \int_{S^2} \left( 2\vv\vY G_{\omega}(\epsilon) -
  \eta^{-2}\vY^2  G_{\sigma}(\epsilon)\right)\, \dv. 
\end{equation}
If we evaluate at $\epsilon=0$ the last integral vanished, and hence we get the
final result
\begin{equation}
  \label{eq:54}
   \vi''(0)= \int_{S^2}F \, \dv \geq 0.
\end{equation}

\end{proof}

The mass functional  $\mf$ evaluated at extreme Kerr gives the value 
 \eqref{eq:39} which in particular is not equal to the total mass $m$ of
 extreme Kerr. This is to be expected since there
 is no obvious relation between $\mf$ and the total mass of the associated  
   initial data with  an asymptotically flat end and a cylindrical
  end. However, the value of $\mf$ at extreme Kerr suggests the following definition
\begin{equation}
  \label{eq:45}
  m=Ce^\frac{{\mf}}{16}, \quad C= e^{-\frac{\ln(2)}{2}-\frac{1}{2}}. 
\end{equation}
We have normalized this quantity in such a way that gives the mass for extreme
Kerr. It is also trivially positive definite (note that $\mf$ is not due to the
extra term $\sigma$ which has no sign). More important, the first variation of
$m$ and the second variation of $m$ are given by
\begin{equation}
  \label{eq:46}
   m' =2^{-4} \mf' m, \quad m'' = 2^{-8}  ( \mf'' +( \mf')^2)m.
\end{equation}
And hence the functional $m$ has the same critical points as $\mf$ and the
second variation is also definitive positive.  These properties makes the
functional $m$ attractive but we will not make use of it in the following. For
the purpose of the proof of theorem \ref{t:main} only the functional $\mf$ is
used.

\section{Variation of the area for extreme throat initial  data}
\label{sec:vari-area-extr}

The results from previous section are somehow to be expected, since they are
the analogous of the variational formulation presented in \cite{Dain05c} and
\cite{Dain05d}.  The remarkable new ingredient is the relation of this mass
functional with the area. This is the subject of this section and it
constitutes the most relevant part of this article.

Consider the formula for the area for an extreme throat  initial data given
by \eqref{eq:26}. The first and second variation of the area are given 
\begin{equation}
  \label{eq:27}
   A' =  \int_{S^2}(\sigma'+ q')  e^{(\sigma+q)} \, dS,
\end{equation}
and 
\begin{equation}
  \label{eq:28}
  A'' =  \int_{S^2}((\sigma'+ q')^2 +( \sigma'' +  q''))
  e^{(\sigma+q)} \, dS. 
\end{equation}
In order to relate these equations with the mass functional we proceed as
follows. We first write the Hamiltonian constraint \eqref{eq:2b} in terms of
$\sigma$ using the relation \eqref{eq:40}
\begin{equation}
  \label{eq:69}
  4\Ls\sigma +|\partial_\theta \sigma|^2 + \frac{|\partial_\theta
    \omega|^2}{\eta^2}  -4(1-\partial^2_\theta q)=0. 
\end{equation}
We integrate this equation in $S^2$. The first term gives zero. We write the second
and third term in terms of  mass functional \eqref{eq:44}, namely
\begin{equation}
  \label{eq:70}
  \int_{S^2}|\partial_\theta \sigma|^2 + \frac{|\partial_\theta
    \omega|^2}{\eta^2} \, dS= 2\pi \mf- 4\int_{S^2}\sigma \, dS.
\end{equation}
For the last term, we integrate by part the terms with $\partial^2_\theta q$, namely
\begin{align}
  \label{eq:71}
  \int_{S^2} \partial^2_\theta q \, dS &= 2\pi \int_{0}^\pi \partial^2_\theta q
  \sin\theta \, d\theta\\
&= 2\pi \int_{0}^\pi\left(\partial_\theta (\partial_\theta q
  \sin\theta)- \partial_\theta q \cos \theta\right) \, d\theta  \label{eq:71a} \\
&= -2\pi \int_{0}^\pi \partial_\theta q \cos \theta \, d\theta \label{eq:71b}\\
&= 2\pi \int_{0}^\pi \left(-\partial_\theta(q\cos\theta) +q\sin\theta\right) \,
d\theta \label{eq:71c}\\
&= 2\pi \int_{0}^\pi q\sin\theta \, d\theta  \label{eq:71d}  \\
& = \int_{S^2} q \, dS.  \label{eq:71e}
\end{align}
To pass from \eqref{eq:71a} to \eqref{eq:71b} we have used that $\sin\theta$
vanished at $(0,\pi)$ and to pass from \eqref{eq:71c} to \eqref{eq:71d} we have
used that $q$ vanished at $(0,\pi)$. 
Collecting equations \eqref{eq:70} and \eqref{eq:71e}, from equation
\eqref{eq:69} we deduce our fundamental equation 
\begin{equation}
  \label{eq:23}
 \mf=8+\frac{2}{\pi} \int_{S^2} (\sigma+q) \, \dv.
\end{equation}
From equation \eqref{eq:23} we get an alternative expression for the first
variation of the mass 
\begin{equation}
  \label{eq:24}
   \mf'=\frac{2}{\pi} \int_{S^2} (\sigma'+ q') \, \dv. 
\end{equation}
And hence, using the first item in lemma \ref{l:1}, we get that 
\begin{equation}
  \label{eq:25}
   \int_{S^2} (\sigma'+ q')  \, \dv|_{\epsilon=0}  =0.
\end{equation}
Analogously, the second variation of the mass is given by
\begin{equation}
  \label{eq:72}
   \mf''=\frac{2}{\pi} \int_{S^2} (\sigma''+ q'') \, \dv.
\end{equation}
Using the second item in lemma \ref{l:1} we obtain
\begin{equation}
  \label{eq:73}
 \int_{S^2} (\sigma''+ q'') \, \dv|_{\epsilon=0}  >0.
\end{equation}
We are now ready to compute the first and second variation of the area. If we
evaluate the first variation of the area (equation \eqref{eq:27}) at
$\epsilon=0$ and use the (remarkable) fact that $e^{\sigma_0+q_0}$ is constant
for the extreme Kerr cylinder (see equation \eqref{eq:key}) we get
\begin{equation}
  \label{eq:29}
  A'|_{\epsilon=0} =4|J|  \int(\sigma'+ q') \, \dv|_{\epsilon=0}.
\end{equation}
Using \eqref{eq:25} we finally get
  \begin{equation}
    \label{eq:30}
     A'|_{\epsilon=0} =0. 
  \end{equation}
For the second variation we use equations \eqref{eq:28} and again equation
\eqref{eq:key} to obtain   
\begin{equation}
  \label{eq:28b}
  A''|_{\epsilon=0} = 4|J| \int_{S^2}((\sigma'+ q')^2 +( \sigma'' +  q'')) \,
  dS|_{\epsilon=0}. 
\end{equation}
The first term inside the integral is clearly positive definite and the second
also by \eqref{eq:72} and \eqref{eq:73}. Hence we deduce 
\begin{equation}
  \label{eq:31}
  A''|_{\epsilon=0} >0. 
\end{equation}
This concludes the proof of theorem \ref{t:main}.

\section{Application: spinning Bowen-York initial data}
\label{sec:aplic-spinn-bowen}
The Bowen-York initial data have been discovered in \cite{Bowen80} and since
that time they have been extensively used in both analytical and numerical
studies.  In this section we will prove that the area of the minimal surface
(which is also an apparent horizon) for the family of spinning Bowen-York
initial data satisfies the inequality \eqref{eq:5}. We will assume that the
inequality is true for extreme throat   initial data. As it was pointed out
in section \ref{sec:main-result} theorem \ref{t:main} suggests that this is the
case but the technical steps to complete the proof remain to be done.

 The argument runs as follows. In
\cite{Dain:2008yu} the extreme limit procedure was rigorously constructed for
this kind of data. The only property of this limit not proved in this article
was the monotonicity of the area. This is proved here as follows.

The area of any surface $r=constant$ is given by
\begin{equation}
  \label{eq:59}
  A_\mu(r)=2\pi r^2 \int_{S^2} \Phi_\mu^4\, \dv.
\end{equation}
We use the same notation $\Phi_\mu$ for the conformal factor of the Bowen-York
family used in \cite{Dain:2008yu}.

The location of the minimal surface (by the isometry of the data) is on
$r=\mu/2$. That is, we want to consider the area $A_\mu(\mu/2)$.  
By definition of minimal surface, we known that
\begin{equation}
  \label{eq:60}
  A_\mu(\mu/2)\leq A_\mu(r),
\end{equation}
for all $r$.  We also known that the conformal factor is monotonically
decreasing with $\mu$ (Lemma 3.2 in \cite{Dain:2008yu}). That is,  for
$\mu_1\leq \mu_2$ we have 
\begin{equation}
  \label{eq:61}
  \Phi_{\mu_1}(r,\theta)\leq \Phi_{\mu_2}(r,\theta).
\end{equation}
Hence we have
\begin{equation}
  \label{eq:63}
  A_{\mu_1}(r)\leq  A_{\mu_2}(r).
\end{equation}

Then we prove the following
\begin{equation}
  \label{eq:64}
  A_{\mu_1}(\mu_1/2)\leq  A_{\mu_1}(r) \leq  A_{\mu_2}(r),
\end{equation}
for all $r$. The first inequality in \eqref{eq:64} follows from \eqref{eq:60},
and the second from \eqref{eq:63}. That is, we have proved that any surface for
$\mu_2$ has bigger area than the minimal surface for $\mu_1$. In particular,
the minimal surface $r=\mu_2/2$, that is
\begin{equation}
  \label{eq:67}
   A_{\mu_1}(\mu_1/2)\leq A_{\mu_2}(\mu_2/2).
\end{equation}
 Note, however, that inequality \eqref{eq:64} is stronger than \eqref{eq:67}.

 We have proved that the area of the minimal surface is monotonically
 decreasing under the extreme limit process constructed in
 \cite{Dain:2008yu}. And hence the area of the related extreme throat initial  data
 (which can be also rigorously constructed, see \cite{gabach09}
 \cite{Hannam:2009ib}) is smaller than the original area of the minimal
 surface. Since the inequality holds on the extreme cylinder it follows that it
 also holds for the spinning Bowen-York initial data.

\section{Final comments}
\label{sec:final-comments}
The first open problem is to complete the analysis presented in the proof of
theorem \ref{t:main} and prove the inequality \eqref{eq:16} on extreme
cylindrical initial data. We expect that the proof will follow similar lines as
the ones presented in \cite{Dain05d} \cite{Dain06c} \cite{Costa:2009hn}
\cite{Chrusciel:2009ki} \cite{Chrusciel:2007ak}. We are currently working on
this \cite{dain10c}.

The second open problem, which is much more difficult and relevant, is to
construct an extreme limit procedure for generic axially symmetric initial data
which satisfies the properties i) and ii) mentioned in section
\ref{sec:main-result}. Then, the conjecture \ref{c:1} will be reduced to the
extreme throat initial data case and hence it will be proved.

\begin{acknowledgments} 
  It is a pleasure to thank Marc Mars for illuminating discussions regarding
  geometrical inequalities over many years. Particular useful for this work
  were the ones that took place at the Mathematisches Forschungsinstitut
  Oberwolfach during the workshop ``Mathematical Aspects of General
  Relativity'', October 11th -- October 17th, 2009 and during the conference
  ``PDEs, relativity \& nonlinear waves'', Granada, April 5-9, 2010. The author
  thanks the organizers of these events for the invitation and the hospitality
  and support of the Mathematisches Forschungsinstitut Oberwolfach.

  The author is supported by CONICET (Argentina).  This work was supported in
  part by grant PIP 6354/05 of CONICET (Argentina), grant 05/B415 Secyt-UNC
  (Argentina) and the Partner Group grant of the Max Planck Institute for
  Gravitational Physics, Albert-Einstein-Institute (Germany).

 \end{acknowledgments}

\appendix

\section{The extreme Kerr cylindrical initial data} 
\label{sec:extr-kerr-cylindr}
In this appendix we collect some well known properties of the extreme Kerr
black hole initial data defined by a slice $t=constant$ in the Boyer-Lindquist
coordinates. We use a slight modification of these coordinates. Let us denote by
$\tilde r$ the Boyer-Lindquist radius, we define $r=\tilde r -m$. In this way
the cylindrical end is located at $r=0$.

In the extreme case we have $\sqrt|J|=m$ (we always take the positive sign of
the square root). We denote by $a$ the standard angular momentum per unit mass
parameter $a=J/m$.  Note that for extreme Kerr we have two possible values for
the angular momentum $J=\pm m^2$, and hence $a=\pm m$. The spacetime metric
depends only one parameter, in our case it is appropriate to chose $J$ as the
free parameter.

The square of the norm $\eta$ of the axial Killing vector is given by 
\begin{equation}
  \label{eq:59bb}
 \eta =\frac{( (r+\sqrt{|J|})^2+|J|)^2-|J| r^2\sin^2\theta}{\Sigma}\sin^2\theta, 
\end{equation}
were  $\Sigma$ is given by
\begin{equation}
  \Sigma=(r+\sqrt{|J|})^2+|J|\cos^2\theta.
\end{equation}
The twist potential of the axial Killing vector is given by 
\begin{equation}\label{wcero}
  \omega =2J(\cos^3\theta-3\cos\theta)-\frac{2J
    |J|\cos\theta\sin^4\theta}{\Sigma}.
\end{equation}
The conformal factor $\Phi$ and the function $q$ that characterize the
intrinsic metric \eqref{thcero}  of the slice  are
given by  
\begin{equation}
\label{ficero}
 e^{2q}=\frac{\Sigma\sin^2\theta}{\eta}, \quad
 \Phi^4=\frac{\eta}{r^2\sin^2\theta}. 
\end{equation}

From these function we compute the following limits
\begin{align}
  \label{eq:56}
  \lim_{r\to 0} (\sqrt{r}\Phi) &=\varphi_0 =
  \left(\frac{4|J|}{1+\cos^2\theta}\right)^{1/4},\\ 
  \label{eq:57}
   \lim_{r\to 0} e^{q} &= e^{q_0} = \frac{1+\cos^2\theta}{2},\\
 \lim_{r\to 0}\omega & = \omega_0
 =-\frac{8J\cos\theta}{1+\cos^2\theta},  \label{eq:58} 
\end{align}
The function $\sigma_0$ is given by
\begin{equation}
  \label{eq:35b}
  \sigma_0=4 \ln \varphi_0=\ln(4|J|)-\ln(1+\cos^2\theta). 
\end{equation}
We have the relation
\begin{equation}
  \label{eq:6}
  \eta_0=\sin^2\theta e^\sigma=\sin^2\theta\varphi^4_0.
\end{equation}
From \eqref{eq:35b} and \eqref{eq:57} we deduce the following key equation
\begin{equation}
  \label{eq:key}
e^{\sigma_0+ q_0}=2|J|.  
\end{equation}

The mass functional $\mf$ defined by \eqref{eq:22} evaluated at the extreme
Kerr cylindrical initial data can also be calculated explicitly from these
expression. The calculus simplify by noting that using equation \eqref{eq:18} we
can directly compute the integral
\begin{equation}
  \label{eq:36}
  \int_0^\pi\left(\frac{|\partial_\theta
    \omega|^2}{\eta^2}\right) \sin\theta \, d\theta=4. 
\end{equation}
The integral for $\sigma_0$ yields
\begin{equation}
  \label{eq:37}
  \int_0^\pi \sigma_0   \sin\theta \, d\theta=2\ln(2)-2\ln(|J|)+4-\pi.
\end{equation}
Finally, for the other integral we get
\begin{equation}
  \label{eq:38}
   \int_0^\pi | \partial \sigma_0|^2    \sin\theta \, d\theta= -12+4\pi.
\end{equation}
Hence, we obtain
\begin{equation}
  \label{eq:39}
  \mf=8(\ln(2|J|)+1).
\end{equation}


\end{document}